\documentclass{article}

\usepackage[english]{babel}

\usepackage[letterpaper,top=2cm,bottom=2cm,left=3cm,right=3cm,marginparwidth=1.75cm]{geometry}

\usepackage{amsmath}
\usepackage{amsmath,amssymb,amsfonts}
\usepackage{graphicx}

\newtheorem{theorem}{Theorem}
\newtheorem{lemma}{Lemma}
\newtheorem{remark}{Remark}
\newenvironment{proof}{{\indent \indent \it Proof:\quad}}{\hfill $\blacksquare$\par}

\usepackage{subcaption} 

\usepackage[colorlinks=true, allcolors=blue]{hyperref}

\title{VA-CDH: A Variance-Aware Method to Optimize Latency for Caching with Delayed Hits}
\author{Bowen Jiang, Chaofan Ma, Duo Wang\\
Shanghai Jiao Tong University}

\begin{document}
\maketitle

\begin{abstract}
Caches are fundamental to latency-sensitive systems like Content Delivery Networks (CDNs) and Mobile Edge Computing (MEC). However, the delayed hit phenomenon—where multiple requests for an object occur during its fetch from the remote server after a miss—significantly inflates user-perceived latency. While recent algorithms acknowledge delayed hits by estimating the resulting aggregate delay, they predominantly focus on its mean value. We identify and demonstrate that such approaches are insufficient, as the real aggregate delay frequently exhibits substantial variance in the true production system, leading to suboptimal latency performance when ignored. 
Thus, we propose VA-CDH, a \underline{v}ariance-\underline{a}ware method to optimize latency for \underline{c}aching with \underline{d}elayed \underline{h}its.
 It employs a novel ranking function that explicitly incorporates both the empirically estimated mean and standard deviation of aggregate delay, allowing caching decisions to account for its variation. 
We derive the analytical distribution of aggregate delay under Poisson arrivals as a theoretical contribution, offering more statistical insight beyond the mean value. 
 Through the simulations conducted on synthetic and real-world datasets, we show that VA-CDH 
 reduces the total latency by  $1\%-6\%$  approximately compared to state-of-the-art algorithms.
\end{abstract}

\section{Introduction}
Cache is a fundamental technique employed across a wide array of computer and networking systems, including Content Delivery Networks (CDNs)\cite{cdn}, and storage systems \cite{storage}, to improve performance and user experience. 
When the requested object is found within the cache (\textbf{hit}), it can be served quickly with low latency. However, if the data is absent (\textbf{miss}), it must be fetched from a slower source, such as neighbor cache or the remote server. This necessary retrieval significantly increases response latency.
Traditionally, the primary objective of cache algorithms has been to maximize the cache hit ratio or byte hit ratio, thereby minimizing cache misses or the total cost of fetching data from remote server. Numerous algorithms, ranging from classic heuristics like Least Recently Used (LRU) \cite{LRU}  to more sophisticated approaches like Least Hit Density (LHD) \cite{lhd}, ADAPTSIZE \cite{adapt}, and various machine learning-based strategies \cite{lrb}, have been developed to optimize these metrics under different workloads and system constraints.

In high-throughput systems like CDNs and MEC, delayed hits significantly impact user-perceived latency. This occurs when frequent requests for an object arrive during the non-negligible latency of fetching it from the remote server after a miss. These subsequent requests, unlike true cache hits, are only served after being fetched back, thus accumulating delay. The widening gap between network throughput and transmission latency exacerbates this issue, increasing the number of requests affected during a miss fetch and amplifying the overall latency impact.
MAD \cite{mad} formalizes this phenomenon, introducing the aggregate delay metric defined as the sum of initial miss latency and latency of following delayed hits and using heuristic estimation based on historical patterns for cache decisions. Addressing heterogeneous fetch latencies and sizes, LAC \cite{atc} proposes a ranking function integrating both aggregate delay and mean latency per request. Recognizing limitations in estimation (like MAD \cite{mad} ) and the miss latency upper bounds, CALA \cite{cala} calculates rank by combining estimated average aggregate delay with the square of miss latency, modulated by a configurable parameter.

While these approaches represent significant advancements in acknowledging delayed hits, they primarily focus on estimating the mean aggregate delay. Our analysis, comparing the aggregate delay estimates of these algorithms against their real values, reveals a critical limitation: the real aggregate delay experienced often exhibits significant variance, deviating substantially from the mean estimates. Relying solely on mean values can be misleading, particularly for content with bursty or unpredictable request patterns. This oversight can lead to suboptimal caching decisions where objects with high average but also highly variable delay impacts might be incorrectly prioritized or evicted. We further derive the analytical distribution of aggregate delay under the assumption of Poisson arrivals, providing a formal basis for understanding its statistical properties beyond just the mean. 
Motivated by our analysis, we propose VA-CDH, a variance-aware method to optimize latency for caching with delayed hits.
In summary, our contributions are as follows:
\begin{itemize}
    \item We identify and demonstrate a critical limitation of existing delayed-hit algorithms: their failure to adequately account for the variance of aggregate delay.
    \item  We derive the distribution of aggregate delay under Poisson arrivals, providing fundamental insights into its statistical nature, including the mean and variance.
    \item We propose the VA-CDH algorithm, with a novel variance-aware ranking function and methods for online parameter estimation.
    \item Through simulations, we demonstrate that VA-CDH reduces the total latency by $1\%-6\%$ approximately compared to existing state-of-the-art algorithms.
\end{itemize}

\section{Problem Definition and Motivation}
\subsection{Setting}
We consider a cache system with a fixed capacity $C$ operating over a discrete time horizon $T$. At each time step $t$, a request $R_t$ arrives for an object $i$, drawn from a set of $N$ unique object types ( $N \gg C$ ). Each object type $i$ has an integer size $s_i$ and a deterministic fetch latency $z_i$. We assume that $\max _i s_i<C$.
If the requested object $R_t$ is present in the cache, it results in a cache hit, incurring zero latency. If object $i$ is not in the cache, a miss occurs and the corresponding latency is $z_i$. A fetch operation from a remote server is initiated, completing at time $t+z_i$.
Subsequent requests for the same object $i$ arriving at time $t^{\prime}$, where $t<t^{\prime} \leq t+z_i$, are considered delayed hits. These incur a latency equal to the remaining fetch time, $z_i-\left(t^{\prime}-t\right)$.
When the missed objects are fetched back, we need to decide which cached objects to evict to accommodate the newly fetched objects. The objective is to minimize the total latency incurred by all requests throughout the time horizon $T$.

\subsection{The Aggregate Delay and Motivation}\label{latency}

For an object $i$, we define its aggregate delay $D_i$ as the sum of the initial miss latency and the cumulative delay of subsequent delayed hits that occur during the fetch process. 
The time-dependent aggregate delay for object $i$ at time $t$ can be calculated as follows:
\begin{equation}\label{Aggdelay}
D_{i} 
=  z_i+\sum_{1 \leq \tau \leq z_i-1}\left(z_i-\tau\right)\mathbb{I}(R_{t+\tau}=i),
\end{equation}
where $\mathbb{I}$ is the indicator function and equals 1 if the arrival request at $t+\tau$ is object $i$, and 0 otherwise.
The aggregate
delay of object $i$ cannot be directly calculated in practice since it
requires the future information of the next $z_i$ time slots.
Thus several existing works have proposed methods to estimate it. For instance:
\begin{itemize}
    \item MAD \cite{mad}  assumes that
all past requests to object $i$ are missed, and then estimate the aggregate delay by calculating the average
aggregate delay per miss(AggDelay).

    \item LAC \cite{atc} estimates the aggregate delay using $D_i=\left(1+\frac{1}{1+\lambda_i z_i}\right) \frac{z_i}{2}$ under Poisson arrivals, where $\lambda_i$ is the arrival rate of object $i$.
    \item CALA \cite{cala} estimates aggregate delay as $D_i=(1-\gamma) \operatorname{AggDelay}+\gamma z_i^2$, where $\gamma$ is used
to adjust $\operatorname{AggDelay}$ and upper bound $z_i^2$.
\end{itemize}
We also note that \cite{tan2025asymptotically} incorporates the delay of eviction during fetch into their aggregate delay estimation. This inclusion stems from their consideration of a bypass scenario, where the user's missed request can be forwarded directly to the remote server instead of being handled by the local cache. Our work does not consider bypass, so their estimation method is not discussed here.
We evaluate three different estimation methods by comparing estimated aggregation latency against the real latency (calculated by equation Eq.~\eqref{Aggdelay}) for object with the highest arrival rate on four datasets: Wiki18 \cite{lrb}, Wiki19 \cite{lrb}, Cloud \cite{cloud}, and YouTube \cite{you}. The latency incurred upon a cache miss is modeled as a constant delay $L=10ms$ plus a component proportional to the object size.
The results are presented in Fig. \ref{fig:four_images}.

\begin{figure}[htbp] 
	\centering 

	\begin{subfigure}{0.48\columnwidth} 
		\centering 
		\includegraphics[width=\linewidth]{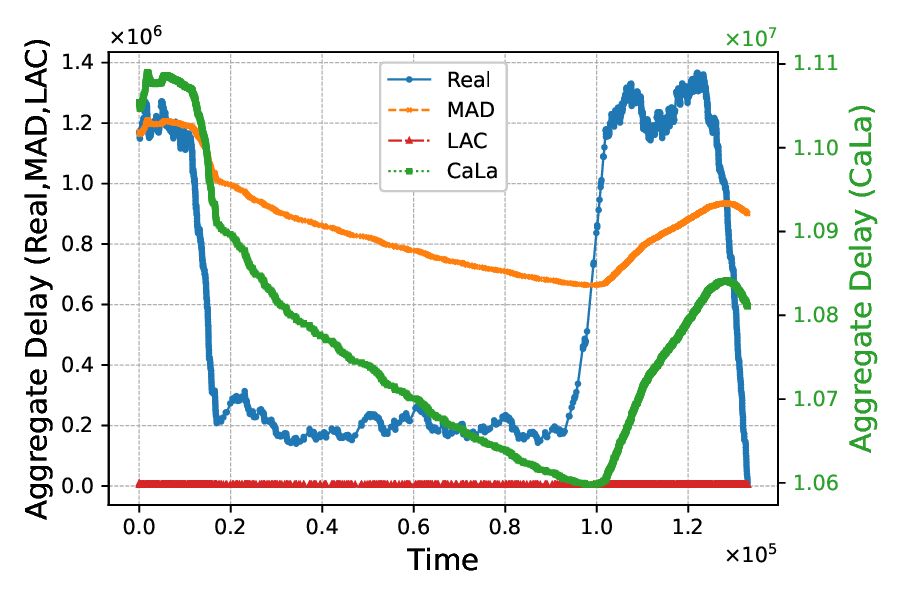} 
		\subcaption{\footnotesize{Wiki18}}
		\label{fig:wiki10}
	\end{subfigure}
	\hfill 
	\begin{subfigure}{0.48\columnwidth} 
		\centering
		\includegraphics[width=\linewidth]{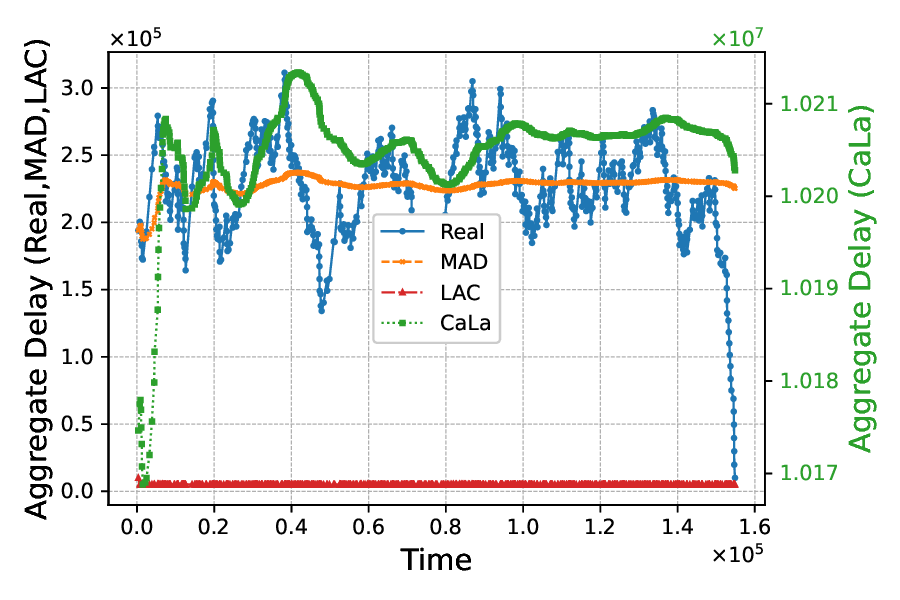} 
		\subcaption{\footnotesize{Wiki19}}
		\label{fig:wiki19}
	\end{subfigure}

	\vspace{\baselineskip} 

	\begin{subfigure}{0.48\columnwidth} 
		\centering
		\includegraphics[width=\linewidth]{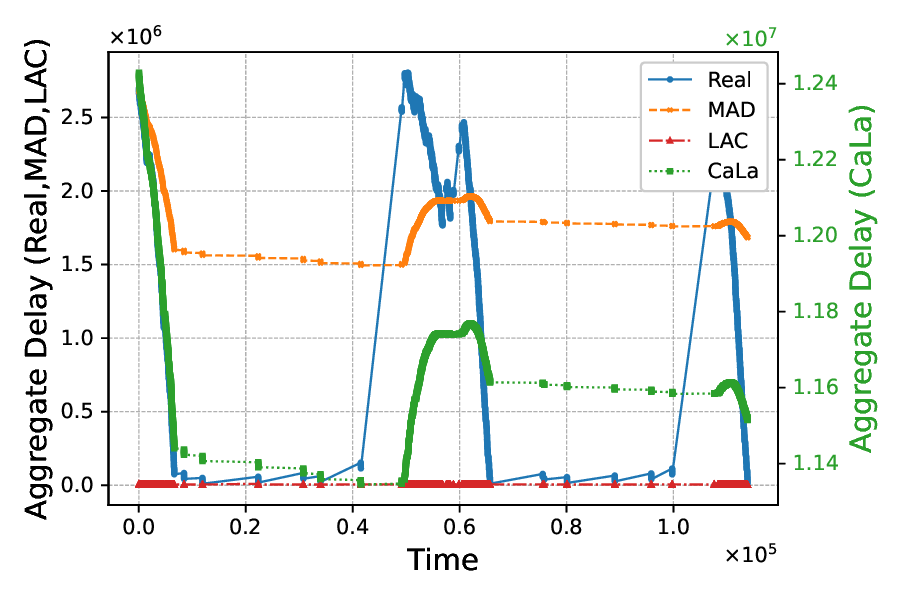} 
		\subcaption{\footnotesize{Cloud}}
		\label{fig:cloud}
	\end{subfigure}%
	\hfill 
	\begin{subfigure}{0.48\columnwidth} 
		\centering
		\includegraphics[width=\linewidth]{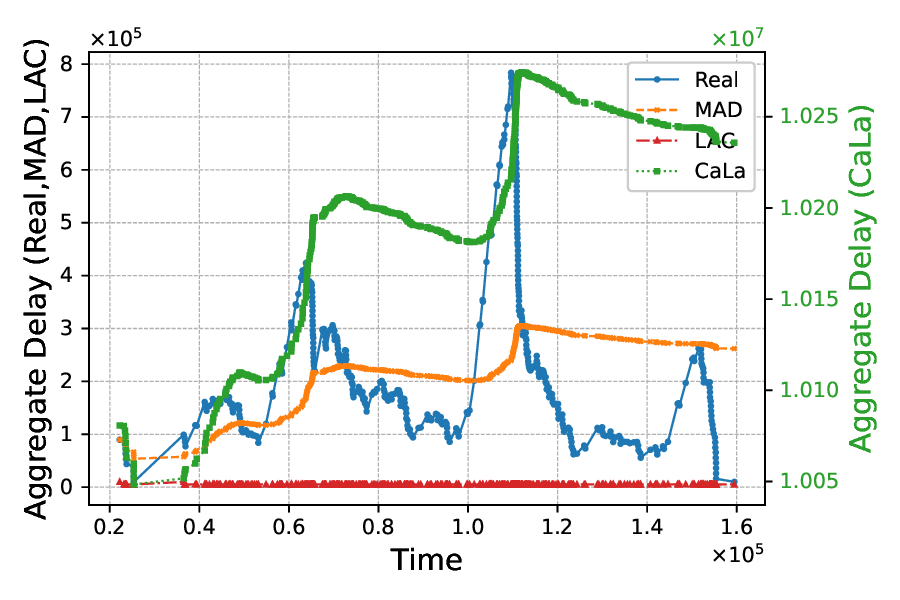} 
		\subcaption{\footnotesize{Youtube}}
		\label{fig:you}
	\end{subfigure}

	\caption{The estimated aggregate delay vs. real aggregate delay calculated by equation \ref{Aggdelay}.} 
	\label{fig:four_images} 
\end{figure}

Fig.\ref{fig:four_images} reveals that real aggregate delay fluctuates considerably, often with large magnitude changes. It also highlights the significant discrepancy between the real aggregate delay and the estimations from the existing three methods, which struggle to capture these dynamics effectively. This is particularly evident for Wiki19 dataset: while the real aggregate delay varies substantially, the MAD \cite{mad} remains largely flat, the LAC \cite{atc} consistently underestimates the delay, and the CALA \cite{cala} is significantly higher than the real value.

\section{Characteristics of Aggregate Delay}\label{texingfenxi}
The preceding discussion highlights the dynamic nature of aggregate delay, suggesting that characteristics capturing this variability should be considered in algorithm design. 
Instead of solely focusing on the average, it is valuable to study the distribution of the aggregate delay or other characteristics.

\subsection{The distribution of aggregate delay }

\begin{theorem} \label{prop:agg_delay_detailed_en}
Assume object $i$ arrives according to a Poisson process with rate $\lambda_i$, and the fetch time upon a miss is a constant $z_i$. The Probability Density Function (PDF) of the aggregate delay $D_i$ for object $i$ is a mixture distribution given by:
\begin{equation}
f(D_i) = e^{-\lambda_i z_i} \delta(D_i - z_i) + \sum_{k=1}^{\infty} P(N_i^z = k) \cdot f_{D_i|k}(D_i)
\label{eq:agg_delay_pdf_detailed_en}
\end{equation}
where $\delta(\cdot)$ is the Dirac delta function, $N_i^z$ is the random variable for the number of arrivals during the fetch period of length $z_i$, $P(N_i^z = k)$ is given as,
\begin{equation}
P(N_i^z = k) = \frac{e^{-\lambda_i z_i}(\lambda_i z_i)^k}{k!}, \quad k=0, 1, 2, \ldots
\label{eq:poisson_prob_detailed_en}
\end{equation}
and $f_{D_i|k}(D_i)$ is the conditional PDF of the aggregate delay given $k \ge 1$ arrivals:
\begin{equation}
f_{D_i|k}(D_i) = \frac{1}{z_i^k (k-1)!} \sum_{j=0}^{\lfloor D_i/z_i \rfloor - 1} (-1)^j \binom{k}{j} (D_i - (j+1)z_i)^{k-1},
\label{eq:conditional_pdf_detailed_en}
\end{equation}
defined for the range $z_i \le D_i \le (k+1)z_i$.
\end{theorem}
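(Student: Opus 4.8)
The plan is to condition on $N_i^z$, the number of requests for object $i$ arriving during the fetch window of length $z_i$, and then exploit the classical order-statistics property of the Poisson process. First I would rewrite the aggregate delay in its continuous-time form compatible with the Poisson model: if the miss occurs at time $t$ and the subsequent requests arrive at offsets $\tau_1,\dots,\tau_k\in(0,z_i)$, then $D_i = z_i + \sum_{m=1}^{k}(z_i-\tau_m)$, i.e.\ the initial miss latency $z_i$ plus the residual fetch time each delayed hit experiences. When $k=0$ there are no delayed hits and $D_i = z_i$ deterministically; this produces the atom $e^{-\lambda_i z_i}\,\delta(D_i-z_i)$ in \eqref{eq:agg_delay_pdf_detailed_en}, whose weight is exactly $P(N_i^z=0)$ from \eqref{eq:poisson_prob_detailed_en}.

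For $k\ge 1$, I would invoke the standard fact that, conditional on $N_i^z=k$, the $k$ arrival epochs in $(0,z_i)$ are distributed as the order statistics of $k$ i.i.d.\ $\mathrm{Uniform}(0,z_i)$ random variables. Since $\sum_{m}(z_i-\tau_m)$ is a symmetric function of the epochs, and since $z_i-U\sim\mathrm{Uniform}(0,z_i)$ whenever $U\sim\mathrm{Uniform}(0,z_i)$, the conditional law of $D_i-z_i$ coincides with that of $S_k:=\sum_{m=1}^{k}U_m$ for $U_1,\dots,U_k$ i.i.d.\ $\mathrm{Uniform}(0,z_i)$. Hence $f_{D_i\mid k}$ is simply the density of $S_k$ shifted by $z_i$, supported on $[z_i,(k+1)z_i]$ as claimed.

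The remaining work is to establish the density of $S_k$, which is a scaled Irwin--Hall law. I would prove
\[
f_{S_k}(s)=\frac{1}{z_i^{\,k}(k-1)!}\sum_{j=0}^{\lfloor s/z_i\rfloor}(-1)^j\binom{k}{j}\,(s-jz_i)^{k-1},\qquad 0\le s\le k z_i,
\]
by induction on $k$ via the convolution $f_{S_k}=f_{S_{k-1}}*f_{U}$, with base case $k=1$ the uniform density; the alternating binomial sum arises from inclusion--exclusion over which subset of the $U_m$ exceed the relevant thresholds. Substituting $s=D_i-z_i$ and using $\lfloor (D_i-z_i)/z_i\rfloor=\lfloor D_i/z_i\rfloor-1$ together with $s-jz_i=D_i-(j+1)z_i$ turns this into \eqref{eq:conditional_pdf_detailed_en}. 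Finally, combining the $k=0$ atom with the mixture $\sum_{k\ge 1}P(N_i^z=k)\,f_{D_i\mid k}$ by the law of total probability gives \eqref{eq:agg_delay_pdf_detailed_en}.

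I expect the main obstacle to be the bookkeeping in the Irwin--Hall step: getting the summation range of $j$ and the floor functions exactly right across the shift $s\mapsto D_i-z_i$, and verifying that the convolution recursion closes cleanly when $s$ crosses an integer multiple of $z_i$ (so that the truncation $\lfloor s/z_i\rfloor$ behaves correctly). A secondary point worth a \emph{Remark} is the passage from the discrete definition in \eqref{Aggdelay} to the continuous-time Poisson model used here; once that correspondence is fixed, the rest is routine.
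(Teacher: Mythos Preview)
Your proposal is correct and follows essentially the same route as the paper: condition on $N_i^z$, handle the $k=0$ atom separately, use the uniform order-statistics property of Poisson arrivals to reduce $D_i-z_i$ to a sum of i.i.d.\ $U(0,z_i)$ variables, identify this as a scaled Irwin--Hall law, and shift by $z_i$ with the corresponding floor adjustment. The only cosmetic difference is that the paper simply cites the Irwin--Hall density as a known result rather than re-deriving it by your proposed convolution induction.
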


\begin{proof}
The proof uses the law of total probability, conditioning on $k=N_i^z$, the number of arrivals during the fetch interval of duration $z_i$. Since arrivals follow a Poisson process with rate $\lambda_i$, $N_i^z$ follows a Poisson distribution with parameter $\lambda_i z_i$, having the probability mass function (PMF) given by $P(N_i^z=k)$ in Eq.~\eqref{eq:poisson_prob_detailed_en}. We analyze the distribution of aggregate delay $D_i$ based on the value of $k$.

\emph{Case 1: $k=0$ (No arrivals).}
If $N_i^z=0$, no subsequent requests arrive during the fetch process. The only delay incurred is the initial miss latency $z_i$. Thus, $D_i = z_i$ when $k=0$. The conditional PDF is a point mass at $z_i$, represented by $f(D_i | N_i^z = 0) = \delta(D_i - z_i)$. Weighting this by the probability $P(N_i^z=0) = e^{-\lambda_i z_i}$ gives the first term in Eq.~\eqref{eq:agg_delay_pdf_detailed_en}.

\emph{Case 2: $k \ge 1$ arrivals.}
If $N_{i}^{z} = k \geq 1$, assume WLOG that the miss time is 0. Let the $k$ request arrival times within $(0, z_{i})$ be $\tau_{1}, \ldots, \tau_{k}$. Given $k$ Poisson arrivals in $(0, z_{i})$, these times $\tau_{j}$ are i.i.d. U$(0, z_{i})$. The aggregate delay is thus:
$D_{i} = z_{i} + \sum_{j=1}^{k} (z_{i} - \tau_{j})$.

Let $V_j = z_i - \tau_j$. Since $\tau_j \sim U(0, z_i)$, by linear transformation, $V_j$ is also distributed as $\text{Uniform}(0, z_i)$.
Let $S_k = \sum_{j=1}^k V_j$. Then $D_i | (N_i^z=k) = z_i + S_k$. The problem reduces to finding the PDF of $S_k$, the sum of $k$ i.i.d. $U(0, z_i)$ variables.
This distribution is known as the scaled Irwin-Hall distribution \cite{IH}. Let $X_j = V_j / z_i \sim U(0, 1)$. Then $X_{(k)} = \sum_{j=1}^k X_j$ follows the standard Irwin-Hall distribution with a known PDF.
Since $S_k = z_i X_{(k)}$, we can find the PDF of $S_k$ using the change of variables formula $f_{S_k}(s) = f_{X_{(k)}}(s/z_i) \cdot |dx/ds| = \frac{1}{z_i} f_{X_{(k)}}(s/z_i)$. Plugging in the standard Irwin-Hall PDF gives:
\begin{equation}
    f_{S_k}(s) = \frac{1}{z_i^k (k-1)!} \sum_{j=0}^{\lfloor s/z_i \rfloor} (-1)^j \binom{k}{j} (s - jz_i)^{k-1},
\end{equation}
for $0 \le s \le k z_i$.

Finally, we find the conditional PDF of $D_i$ given $k$, denoted $f_{D_i|k}(d)$, using the transformation $D_i = z_i + S_k$ (so $s = d - z_i$ and $|ds/dd|=1$).
\begin{align*}
f_{D_i|k}(d) &= f_{S_k}(d-z_i) \cdot |ds/dd| \\
&= \frac{1}{z_i^k (k-1)!} \sum_{j=0}^{\lfloor (d-z_i)/z_i \rfloor} (-1)^j \binom{k}{j} ((d-z_i) - jz_i)^{k-1} \\
&= \frac{1}{z_i^k (k-1)!} \sum_{j=0}^{\lfloor d/z_i \rfloor - 1} (-1)^j \binom{k}{j} (d - (j+1)z_i)^{k-1}. \label{eq:derivation_step_detailed_en} \tag{$\dagger$}
\end{align*}
The range for $s$ ($0 \le s \le kz_i$) implies the range for $d = z_i + s$ is $z_i \le d \le (k+1)z_i$. The summation limit $\lfloor (d-z_i)/z_i \rfloor$ becomes $\lfloor d/z_i \rfloor - 1$. The expression $(\dagger)$ matches Eq.~\eqref{eq:conditional_pdf_detailed_en}.

\emph{Combining Cases:}
Summing the conditional results for all possible values of $k \ge 1$, weighted by their Poisson probabilities $P(N_i^z=k)$, and adding the term for $k=0$, yields the overall PDF for $D_i$ as stated in Eq.~\eqref{eq:agg_delay_pdf_detailed_en}.
\end{proof}

\subsection{The mean and variance of aggregate dealy}
We derive the mean and variance of the $D_i$ under the Poisson arrival assumption in Theorem \ref{meanandvar}. This derivation first requires determining the conditional mean and variance of aggregate delay given a fixed number of arrivals during the fetch process, $k$, as established in Lemma\ref{mean}.

\begin{lemma}\label{mean}
  With Poisson arrivals and the miss latency for object $i$ is $z_i$, the conditional mean and variance of the aggregate delay $D_i$ given $k$ arrivals are:
 $E\left[D_i \mid N_i^z=k\right]=z_i\left(1+\frac{k}{2}\right)$, $\operatorname{Var}\left(D_{i} \mid N_i^z=k\right)=\frac{k z_i^2}{12}$.  
\end{lemma}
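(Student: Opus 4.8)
The plan is to reuse the representation already established in the proof of Theorem~\ref{prop:agg_delay_detailed_en}. Conditioning on $N_i^z = k \ge 1$ and taking the miss time to be $0$, the $k$ arrival epochs $\tau_1,\dots,\tau_k$ in $(0,z_i)$ are i.i.d.\ $\mathrm{Uniform}(0,z_i)$ by the order-statistics property of the Poisson process. Then the aggregate delay decomposes as $D_i = z_i + \sum_{j=1}^{k}(z_i - \tau_j) = z_i + S_k$, where $S_k = \sum_{j=1}^{k} V_j$ and $V_j := z_i - \tau_j$. Since an affine map of a uniform random variable is again uniform, each $V_j \sim \mathrm{Uniform}(0,z_i)$, and the $V_j$ remain mutually independent.

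First I would record the first two moments of a single $\mathrm{Uniform}(0,z_i)$ variable, namely $E[V_j] = z_i/2$ and $\mathrm{Var}(V_j) = z_i^2/12$. By linearity of expectation, $E[S_k \mid N_i^z = k] = k z_i/2$, hence $E[D_i \mid N_i^z = k] = z_i + kz_i/2 = z_i(1 + k/2)$. For the variance, mutual independence of the $V_j$ gives $\mathrm{Var}(S_k \mid N_i^z = k) = k\,\mathrm{Var}(V_1) = kz_i^2/12$, and adding the deterministic constant $z_i$ leaves the variance unchanged, so $\mathrm{Var}(D_i \mid N_i^z = k) = kz_i^2/12$. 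The degenerate case $k=0$ is trivially consistent, since then $D_i = z_i$ almost surely and both formulas vanish in their $k$-dependent part.

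I do not expect a serious obstacle here: the only point requiring care is the justification that, conditional on exactly $k$ arrivals in the fetch window, those arrival times are i.i.d.\ uniform on $(0,z_i)$. This is precisely the fact already invoked in the proof of Theorem~\ref{prop:agg_delay_detailed_en}, so I would cite that step rather than reprove it, and the remainder is just linearity of expectation and additivity of variance under independence.
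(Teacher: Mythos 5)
Your proof is correct and follows exactly the route the paper intends: the paper omits the proof of Lemma~\ref{mean} but states it rests on the representation from Theorem~\ref{prop:agg_delay_detailed_en} (conditional arrival times i.i.d.\ uniform, $D_i = z_i + S_k$) together with linearity of expectation and additivity of variance for independent uniforms, which is precisely your argument. No gaps.
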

Lemma \ref{mean} is based on the properties of mean and variance, and the characteristics of uniform distribution. Due to limited space, its proof is not detailed here. Based on Lemma \ref{mean}, we can derive the mean and variance of the aggregation delay.

\begin{theorem}\label{meanandvar}
Assume object $i$ arrives according to a Poisson process with rate $\lambda_i$ and the miss latency is $z_i$. Based on Lemma \ref{mean}, the mean and variance of $D_i$ are:
\begin{equation}
    E[D_i] = z_i \left(1 + \frac{\lambda_iz_i}{2}\right),\operatorname{Var}(D_i) = \frac{z_i^3 \lambda_i}{3}
\end{equation}

\end{theorem}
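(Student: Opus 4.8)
The plan is to derive both moments by conditioning on $N_i^z$, the number of arrivals during the fetch window of length $z_i$, and then feeding the conditional moments from Lemma \ref{mean} into the tower property and the law of total variance. The only external facts needed are that $N_i^z$ follows a Poisson distribution with parameter $\lambda_i z_i$ (exactly as used in Theorem \ref{prop:agg_delay_detailed_en}), so that $E[N_i^z] = \operatorname{Var}(N_i^z) = \lambda_i z_i$.

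For the mean, I would write $E[D_i] = E\big[E[D_i \mid N_i^z]\big]$ and substitute $E[D_i \mid N_i^z = k] = z_i\big(1 + \tfrac{k}{2}\big)$ from Lemma \ref{mean}; note this expression also returns the correct value $z_i$ at $k=0$, matching the atom in Theorem \ref{prop:agg_delay_detailed_en}, so no special-casing of the $k=0$ term is required. Linearity of expectation then gives $E[D_i] = z_i\big(1 + \tfrac12 E[N_i^z]\big) = z_i\big(1 + \tfrac{\lambda_i z_i}{2}\big)$.

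For the variance, I would apply the law of total variance, $\operatorname{Var}(D_i) = E\big[\operatorname{Var}(D_i \mid N_i^z)\big] + \operatorname{Var}\big(E[D_i \mid N_i^z]\big)$. The first term uses $\operatorname{Var}(D_i \mid N_i^z = k) = \tfrac{k z_i^2}{12}$ from Lemma \ref{mean}, yielding $E\big[\operatorname{Var}(D_i \mid N_i^z)\big] = \tfrac{z_i^2}{12}\,E[N_i^z] = \tfrac{\lambda_i z_i^3}{12}$. The second term uses that $k \mapsto z_i\big(1 + \tfrac{k}{2}\big)$ is affine in $k$, so $\operatorname{Var}\big(E[D_i \mid N_i^z]\big) = \tfrac{z_i^2}{4}\operatorname{Var}(N_i^z) = \tfrac{\lambda_i z_i^3}{4}$. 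Adding, $\operatorname{Var}(D_i) = \tfrac{\lambda_i z_i^3}{12} + \tfrac{\lambda_i z_i^3}{4} = \tfrac{\lambda_i z_i^3}{3}$, as claimed.

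I do not anticipate a genuine obstacle: once Lemma \ref{mean} is available the argument is pure substitution. The only points worth stating with care are (i) that the conditional moment formulas remain valid at $k=0$ (where $D_i = z_i$ almost surely), so the decomposition can run over all $k \ge 0$ and the Dirac mass in Theorem \ref{prop:agg_delay_detailed_en} needs no separate treatment, and (ii) that a Poisson random variable has equal mean and variance, which is precisely what makes both contributions to the total-variance formula proportional to $\lambda_i z_i^3$. A direct computation from the mixture PDF of Theorem \ref{prop:agg_delay_detailed_en} would also work but would require resumming Irwin--Hall moments against the Poisson weights, so conditioning on $N_i^z$ is clearly the cleaner route.
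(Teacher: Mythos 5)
Your proposal is correct and follows essentially the same route as the paper: conditioning on the Poisson count $N_i^z$ and applying the tower property and the law of total variance to the conditional moments from Lemma \ref{mean}, with the same intermediate values $\tfrac{\lambda_i z_i^3}{12}$ and $\tfrac{\lambda_i z_i^3}{4}$. Your added observation that the conditional formulas remain valid at $k=0$ is a nice touch of care but does not change the argument.
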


\begin{proof}
We use the Law of Total Expectation and the Law of Total Variance.
Recall that for $N_i^z \sim \text{Poisson}(\lambda_iz_i)$, $E[N_i^z] = \lambda_iz_i$ and $\operatorname{Var}(N_i^z) = \lambda_iz_i$.

By the Law of Total Expectation, we have:

\begin{equation}
\begin{split}
        E[D_i] &= E_{N_i^z}[E[D_i | N_i^z]]= E_{N_i^z}\left[z_i\left(1 + \frac{N_i^z}{2}\right)\right]  \\
&= z_i\left(1 + \frac{E_{N_i^z}[N_i^z]}{2}\right)= z_i\left(1 + \frac{\lambda_iz_i}{2}\right) 
\end{split}
\end{equation}

By the Law of Total Variance, we have:
\begin{equation}
\begin{split}
\operatorname{Var}(D_i) &= E_{N_i^z}[\operatorname{Var}(D_i | N_i^z)] + \operatorname{Var}_{N_i^z}(E[D_i | N_i^z]) \\
&= E_{N_i^z}\left[\frac{N_i^z z_i^2}{12}\right] + \operatorname{Var}_{N_i^z}\left(z_i\left(1 + \frac{N_i^z}{2}\right)\right) \\
&= \frac{z_i^3 \lambda_i}{3},
\end{split}
\end{equation}
where the last step is due to $E[N_i^z] = \lambda_iz_i$, $\operatorname{Var}(N_i^z) = \lambda_iz_i$ and $\operatorname{Var}(z_i)=0$.
\end{proof}

\begin{remark}
\textbf{(Gaussian Approximation).}
    When the number of subsequent delayed hits $k$ is large, the conditional distribution of aggregate delay, $f_{D_i|k}(d)$, can be well approximated by a Gaussian distribution. This follows because $D_i | (N_i^z=k) = z_i + S_k$, and for large $k$, $f_{S_k}$ itself is approximately Gaussian due to the Central Limit Theorem applied to the sum of uniform random variables. Since $D_i | (N_i^z=k)$ is a linear transformation of $S_k$, it retains an approximately Gaussian form. Based on Proposition.\ref{mean}, for large $k$, we can approximate the distribution as:
$
f_{D_i|k}(d) \approx \mathcal{N}\left(z_i(1+k / 2), \frac{k z_i^2}{12}\right).
$

\end{remark}

\begin{remark}
\textbf{(Parameter dependency).}
    Theorem \ref{meanandvar} considers the aggregate delay when the miss latency, $z_i$, is constant. 
  The mean aggregate delay, $E[D_i]$, increases linearly with the arrival rate $\lambda_i$. Furthermore, the presence of the $\lambda_i z_i^2$ term shows that the impact of the arrival rate is amplified quadratically by the fetch latency, as longer fetches provide more opportunity for costly delayed hits. The variance, $Var(D_i)$, grows linearly with the arrival rate $\lambda_i$, indicating greater variability when subsequent requests are more frequent during the fetch. Most notably, the variance exhibits a strong cubic dependence ($z_i^3$) on the fetch latency $z_i$, highlighting that the variability  of the aggregate delay experienced during a miss escalates rapidly as fetch latency increases.
\end{remark}

\section{Variance-Aware Methond}\label{four}

\subsection{Variance-Aware Rank Function}
In order to consider the variation of aggregation delay in algorithm design, we design the following ranking function:
\begin{equation}\label{rank}
    f_i=\frac{E[D_i] + \omega \sigma[D_i]}{R_is_i},
\end{equation}
where $E[D_i]$ is the mean value, $\sigma[D_i]$ is the standard deviation,
 $\omega >0 $ controls the algorithm’s sensitivity to the aggregate delay variability, $R_i$ is time interval remaining until next request, and $s_i$ is the size of object $i$.
 When $\omega=0 $, our rank function degenerates to only considering the mean of the aggregate delay, just like MAD \cite{mad}.
Objects with a higher rank $f_i$ are considered more valuable and less likely to be evicted. Based on the mean and variance of aggregation delay under deterministic miss latency discussed in section \ref{texingfenxi}, we can obtain the following ranking function,
\begin{equation}\label{rank2}
    f_i=\frac{z_i \left(1 + \frac{\lambda_iz_i}{2}\right)  + \omega z_i\sqrt{\frac{ \lambda_iz_i}{3} } }{R_is_i},
\end{equation}
where $z_i$ is miss latency, $\lambda_i$ is the arrival rate for object $i$.

The idea of combining mean and variance (or standard deviation) for decision-making under uncertainty is fundamental and appears in various fields. In financial portfolio theory \cite{MPT}, investors evaluate assets based on expected return (mean) and risk (variance/standard deviation). Rational investors aim to minimize risk for a given level of return or maximize return for a given level of risk. Utility functions often incorporate both, e.g., $U = E[R] - \frac{A}{2}\sigma^2[R]$, where $R$ is the return and $A$ is the risk aversion coefficient.
Structurally, our ranking function, which adds a term related to standard deviation to the mean, shares a similar idea with the Upper Confidence Bound (UCB) algorithm in bandit problems \cite{bandit} that also combines an estimate with an uncertainty bonus.

\subsection{VA-CDH Algorithm}
To determine the rank of cached objects when eviction is needed, we need estimations for both the residual time $R_i$ and the arrival rate $\lambda_i$. Estimating $R_i$ is a well-explored area, with many hit ratio focused algorithms effectively acting as estimators; therefore, we leverage existing methods like LRU for this aspect. Specifically, we take the time interval between the current time $t$ and the last arrival time of object $i$ as the estimated value of $R_i$. We estimate the request arrival rate $\lambda_{i}$ for each content $i$ by approximating its corresponding mean inter-arrival time, $\bar{X}^{i}$. Denoting the $j$-th observed inter-arrival time for object $i$ as $X_{j}^{i}$, we calculate the sample mean from $K$ observations: $\hat{\bar{X}}^{i}=\frac{1}{K}\sum_{j=1}^{K}X_{j}^{i}$. It is well-established that $\hat{\bar{X}}^{i}$ is an unbiased estimator of the true mean inter-arrival time $\bar{X}^{i}=\frac{1}{\lambda_{i}}$.
To manage overhead and adapt to changing request patterns, tracking the entire arrival history is impractical. Since production workloads are known to be highly dynamic and non-stationary, we utilize a sliding learning window with size $S$, estimating the parameters $R_i$ and $\lambda_i$ based only on the $S$ most recent arrivals within this window.


\textbf{VA-CDH Algorithm:} 
When a request arrives, we add it to the learning window and update the characteristics of this request (inter-arrival time, latest arrival time). 
When we need to make evictions, we use the estimated residual time $R_i$ and arrival rate $\lambda_i$ to calculate the rank of each cached object based on equation Eq.~\eqref{rank2} and evict the objects with the smallest ranks to store the objects fetched from the remote server.
When the training window is full, we use the data within the training window to re-estimate the arrival rate.

\section{Simulation Results}\label{five}
We consider the baseline algorithms, including  LRU \cite{LRU}, LAC \cite{atc},
CALA \cite{cala}, LHD\cite{lhd}, LRB\cite{lrb}, ADAPTSIZE \cite{adapt},
LHDMAD \cite{mad} and LRUMAD\cite{mad}.
Notably, the algorithms LAC \cite{atc}, CALA \cite{cala}, LHDMAD \cite{mad}, and LRUMAD \cite{mad} address the caching problems with delay hits, while ADAPTSIZE \cite{adapt}, LHD \cite{lhd}, and LRB \cite{lrb} focus on maximizing cache hit ratio without delayed hits.
The performance of algorithm A is measured by the latency improvement relative to LRU, which is defined as the ratio of the latency difference between LRU and algorithm A to the latency of LRU \cite{cala}. In subsequent experiments, we set the window size to 10K and $\omega=1$. The impact of these parameters on the performance of VA-CDH is discussed in subsection \ref{mingan}.


\subsection{Synthetic dataset}\label{hechengshiyan}

\begin{figure}[h]
\centering
\includegraphics[width=3in]{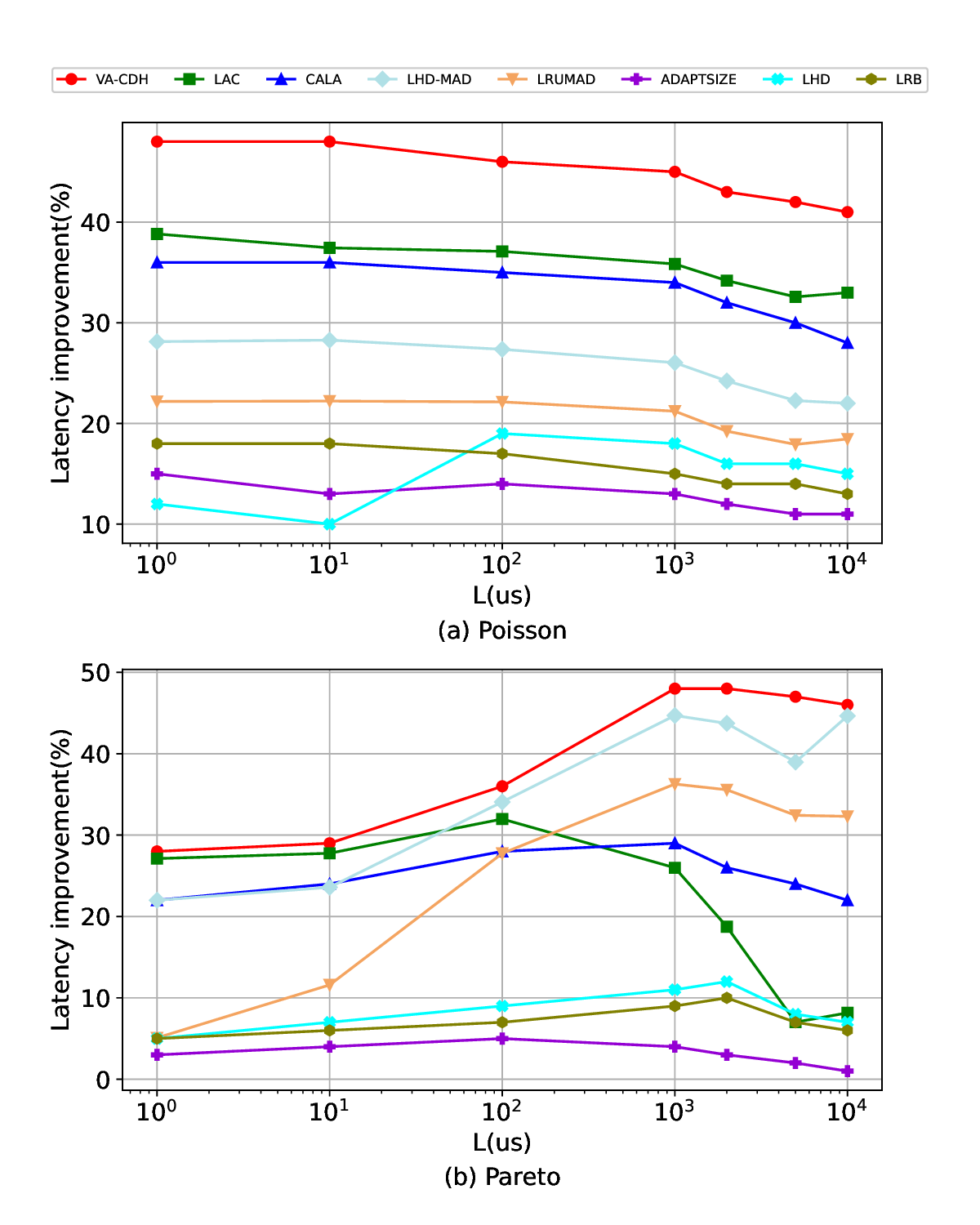}
\caption{Comparison of latency improvement of VA-CDH and SOTAs under the
synthetic dataset ($C=500\text{MB}$).
}
\label{hechengqueding}
\end{figure}

We first validate the performance of VA-CDH on a synthetic dataset. This dataset consists of 100K requests targeting 100 distinct objects, with object popularity following a Zipf distribution. The object sizes are integers uniformly distributed in the range $[1 \text{MB}, 100 \text{MB}]$. We evaluate our VA-CDH algorithm under two distinct request arrival processes: one governed by a Poisson process and the other by a Pareto distribution. 
The latency incurred upon a cache miss is as discussed in subsection \ref{latency} .
Fig. \ref {hechengqueding} presents the results, showing that VA-CDH outperforms state-of-the-art (SOTA) algorithms across various miss latency settings. Notably, this superior performance holds even when the arrival process follows a Pareto distribution, which violates our Poisson assumption.

\subsection{Real world traces}\label{shiji}


\begin{figure*}[t]
\centering
{\includegraphics[width=6.5in]{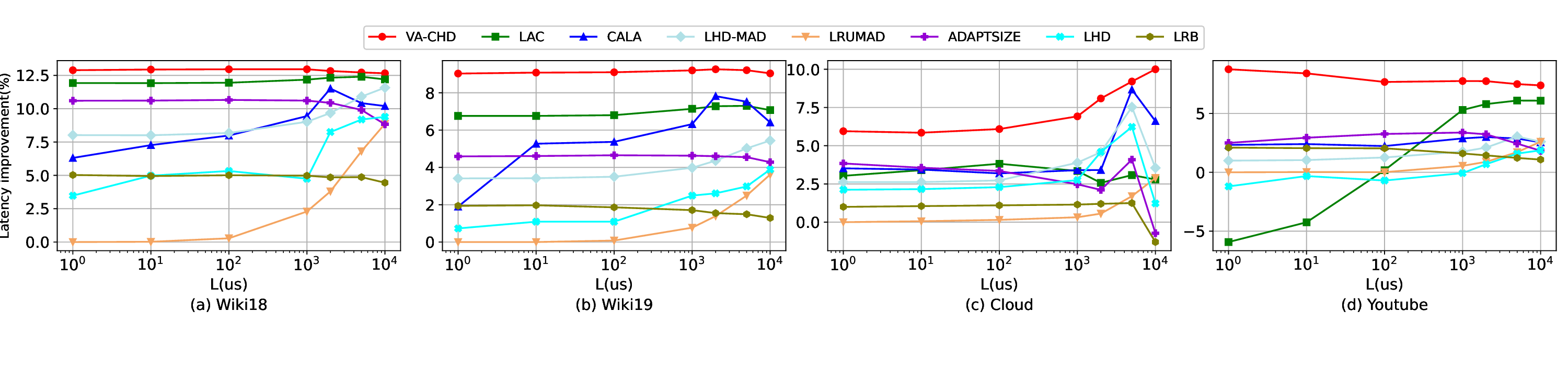}}%
\label{fig_first_case}
\caption{Comparison of latency improvement of VA-CDH and SOTAs using a 256GB cache.}
\label{256queding}
\end{figure*}

We evaluate our algorithm using four real-world traces: Wiki2018 \cite{lrb}, Wiki2019 \cite{lrb}, Cloud \cite{cloud}, and YouTube \cite{you}. 
Fig. \ref {256queding} compares our algorithm's latency improvement against SOTA methods using a 256 GB cache across various fetch latency settings. Our algorithm consistently outperforms the SOTA approaches under these conditions. Specifically, our algorithm achieves latency reduction of approximately $1\%$ on Wiki2018, around $3\%$ on Wiki2019, between $2\%-4\%$ on Cloud, and $1\%-6\%$ on YouTube. This performance advantage is primarily attributed to our ranking function, which incorporates the variance of delayed hits to prioritize eviction candidates more effectively.

\subsection{Sensitivity Analysis}\label{mingan}

\begin{figure}[h]
\centering
\includegraphics[width=2.5in]{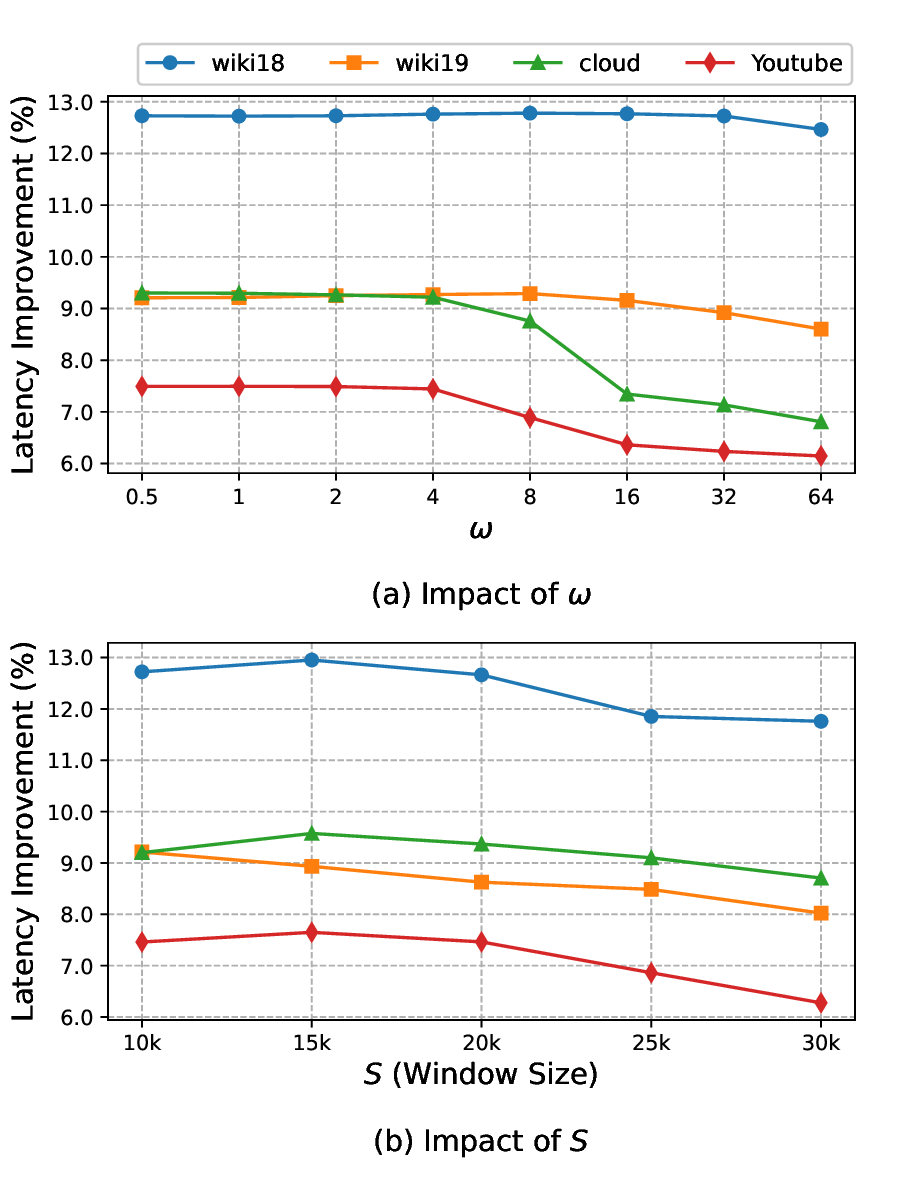}
\caption{The impact of hyperparameters on latency improvement. (a) fixed $S=10K$, (b) fixed $\omega=1$.
}
\label{canshu}
\end{figure}

The performance of our algorithm depends on two key hyperparameters: the sliding window size $S$ and the control parameter $\omega$. 
The choice of the parameter $\omega$ involves a trade-off: if $\omega$ is too small, the algorithm may not sufficiently capture the impact of variance. Conversely, if $\omega$ is too large, the variance term can dominate the mean, potentially impairing the effectiveness of the priority determination. Similarly, the window size $S$ selection is critical: an overly small window prevents the algorithm from adequately learning request arrival characteristics, while an excessively large window hinders its ability to adapt effectively to changes in the request arrival process. Consequently, we perform a sensitivity analysis to evaluate their impact, and the simulation results are shown in Fig. \ref{canshu}.
The experimental results indicate that our algorithm outperforms current SOTA methods across these different parameter settings. 
This demonstrates the robustness of our algorithm.

\section{Conclusion}\label{six}
In this paper, we address the delayed hit caching problem, highlighting that existing methods critically overlook the variance of aggregate delay. We propose a novel variance-aware algorithm that estimates and incorporates both the mean and standard deviation of aggregate delay into its online ranking function. We show that our approach achieves superior performance over current state-of-the-art delayed-hit algorithms through simulations.


\begin{thebibliography}{00}
\bibitem{cdn}
Vakali A, Pallis G. Content delivery networks: Status and trends[J]. IEEE Internet Computing, 2003, 7(6): 68-74.

\bibitem{storage}
Jacob B, Wang D, Ng S. Memory systems: cache, DRAM, disk[M]. Morgan Kaufmann, 2010.
\bibitem{LRU}
Puzak T R. Analysis of cache replacement-algorithms[M]. University of Massachusetts Amherst, 1985.


\bibitem{lhd}
Beckmann N, Chen H, Cidon A. {LHD}: Improving cache hit rate by maximizing hit density[C]//15th USENIX Symposium on Networked Systems Design and Implementation (NSDI 18). 2018: 389-403.
\bibitem{adapt}
Berger D S, Sitaraman R K, Harchol-Balter M. {ADAPTSIZE}: Orchestrating the Hot Object Memory Cache in a Content Delivery Network[C]//14th USENIX Symposium on Networked Systems Design and Implementation (NSDI 17). 2017: 483-498.

\bibitem{lrb}
Song Z, Berger D S, Li K, et al. Learning relaxed belady for content distribution network caching[C]//17th USENIX Symposium on Networked Systems Design and Implementation (NSDI 20). 2020: 529-544.


\bibitem{mad} 
Atre N, Sherry J, Wang W, et al. Caching with delayed hits[C]//Proceedings of the Annual conference of the ACM Special Interest Group on Data Communication on the applications, technologies, architectures, and protocols for computer communication. 2020.
\bibitem{atc} Yan G, Li J. Towards latency awareness for content delivery network caching[C]//2022 USENIX Annual Technical Conference (USENIX ATC 22). 2022: 789-804.
\bibitem{cala} 
Zhang C, Tan H, Li G, et al. Online file caching in latency-sensitive systems with delayed hits and bypassing[C]//IEEE INFOCOM 2022-IEEE Conference on Computer Communications. IEEE, 2022.



\bibitem{tan2025asymptotically}
Tan H, Wang Y, Zhang C, et al. Asymptotically Tight Approximation for Online File Caching With Delayed Hits and Bypassing[J]. IEEE Transactions on Networking, 2025.

\bibitem{cloud}
Waldspurger C A, Park N, Garthwaite A, et al. Efficient {MRC} construction with {SHARDS}[C]//13th USENIX Conference on File and Storage Technologies (FAST 15). 2015: 95-110.
\bibitem{you}
Zink M, Suh K, Gu Y, et al. Watch global, cache local: YouTube network traffic at a campus network: measurements and implications[C]//Multimedia Computing and Networking 2008. SPIE, 2008.
\bibitem{IH}
Hall P. The distribution of means for samples of size n drawn from a population in which the variate takes values between 0 and 1, all such values being equally probable[J]. Biometrika, 1927: 240-245.

\bibitem{MPT}
Fabozzi F J, Gupta F, Markowitz H M. The legacy of modern portfolio theory[J]. The journal of investing, 2002, 11(3): 7-22.
\bibitem{bandit}
Lattimore T, Szepesvári C. Bandit algorithms[M]. Cambridge University Press, 2020.
\end{thebibliography}
\end{document}